\def\titlerunning#1{\gdef\titrun{#1}}
\def\author#1{\gdef\autrun{\def\and{\unskip, }#1}\gdef\@author{#1}}
\def\address#1{{\def\and{\\\hspace*{18pt}}\renewcommand{\thefootnote}{}%
\footnote {#1}}%
\markboth{\autrun}{\titrun}}
\def\email#1{e-mail: #1}
\def\subjclass#1{{\renewcommand{\thefootnote}{}%
\footnote{\emph{Mathematics Subject Classification (2010):} #1}}}
\def\keywords#1{\par\medskip
\noindent\textbf{Keywords.} #1}
\theoremstyle{definition}
\numberwithin{equation}{section}
\def\tensor{\,\raise2pt\hbox{${}_{\otimes}$}\,}
\def\fdg{\,:\,}
\def\ptl{\partial}
\def\rest#1{\raise-2pt\hbox{${\lfloor_{#1}}$}}
\def\mbo#1{\boldsymbol{#1}}
\def\ip#1#2{\langle#1,#2\rangle}
\def\olin#1{\overline{#1}{}}
\def\grad{{\nabla}}
\def\halb{\frac{1}{2}}
\def \a{\alpha}
\def \b {\beta}
\newtheorem{theorem}{Theorem}[section]
\newtheorem{lemma}[theorem]{Lemma}
\newcommand{\ba}{\begin{array}}
\newcommand{\ea}{\end{array}}
\newcommand{\bea}{\begin{eqnarray}}
\newcommand{\eea}{\end{eqnarray}}
\newcommand{\bee}{\begin{eqnarray*}}
\newcommand{\eee}{\end{eqnarray*}}
\renewcommand{\vec}[1]{\mbox{\boldmath $#1$}}
\renewcommand{\a}{\alpha}
\renewcommand{\b}{\beta}
\newcommand{\green}[1]{{\color{green}#1}}
\newcounter{mnotecount}[section]
\renewcommand{\themnotecount}{\thesection.\arabic{mnotecount}}
\newcounter{mymnotecount}[section]
\renewcommand{\themymnotecount}{\thesection.\arabic{mymnotecount}}
\newcommand{\mymnote}[1]{\protect{\stepcounter{mymnotecount}}${\raisebox{0.5\baselineskip}[0pt]{\makebox[0pt][c]{\color{green}{\tiny\em$\bullet$\themnotecount}}}}$\marginpar{\raggedright\tiny\em$\!\bullet$\themymnotecount:

\green{#1}}\ignorespaces}
\renewcommand{\mymnote}[1]{}
\begin{document}


\baselineskip=17pt


\titlerunning{Quasi Local Mass}

\title{Quasi-Local Mass near the Singularity,  the Event Horizon and the Null Infinity of Black Hole Spacetimes}

\author{Nishanth Gudapati
\and 
Shing-Tung Yau}

\date{}

\maketitle

\address{N.Gudapati: Center of Mathematical Sciences and Applications, Harvard University, 20 Garden Street, Cambridge, MA-02138, USA; \email{nishanth.gudapati@cmsa.fas.harvard.edu}
\and
S.-T. Yau: Department of Mathematics, Harvard University, 1 Oxford Street, MA-02138, USA; \email{yau@math.harvard.edu}}

\subjclass{Primary 83C40, Secondary 53Z05}


\begin{abstract}
The behaviour of geometric quantities close to geometric pathologies of a spacetime is relevant to deduce the physical behaviour of the system. In this work, we compute the quasi-local mass quantities -  the Hawking mass, the Brown-York mass and the Liu-Yau mass  in the maximal extensions of the spherically symmetric solutions of the Einstein equations inside the black hole region, at the singularity,  the event horizon, and the null infinity, in the limiting sense of a geometric flow. 
\keywords{Gravitational Energy, Quasi-Local Mass, Black Holes}
\end{abstract}
\section{Quasi-Local Mass and Black Holes}
The notion of gravitational mass-energy  plays an important role in studying the physical properties of a spacetime $(\bar{M}, \bar{g})$. For instance, a linear wave equation $(\square_{\bar{g}} u =0)$ provides a useful tool to study the physically relevant aspects of spacetimes such as stability, red-shift behaviour near black hole spacetimes (e.g., Schwarzschild, Kerr black hole spacetimes).
In contrast with a linear wave equation, which has a well-defined notion of energy, e.g.,
\begin{align}
E \fdg = \int_{\olin{\Sigma}} ( \vert \ptl_t u \vert^2 + \vert \grad u \vert_x^2 ) \bar{\mu}_{\bar{q}}
\end{align}
 and energy density $ \mathbf{e} \fdg = \ptl_t u^2 + \vert \grad u \vert_x^2$ on a spacelike hypersurface $(\olin{\Sigma}, \bar{q}),\bar{M}=\olin{\Sigma}\times~\mathbb{R}.$  In general relativity, an unambiguous notion of local mass-density is not reasonable due to the equivalence principle. Therefore, a quasi-local notion of mass is used for Einstein's equations for general relativity, where the mass-energy is captured in  the inner boundary $\Sigma$ of a spacelike hypersurface $\olin{\Sigma} \hookrightarrow \bar{M},$ where $\bar{M}$ is a $3+1$ dimensional Lorentzian spacetime. Suppose, the metric $\bar{q}$ on $\olin{\Sigma}$ is given by 
 \begin{align}
 \bar{g} = - \bar{N}^2 dt^2 + \bar{q}_{ab} (dx^i + \bar{N}^i dt) \otimes  (dx^j + \bar{N}^j dt ), 
 \end{align}
 where $\bar{N} \neq 0,$ so that $(\olin{\Sigma}, \bar{q})$ is a (spacelike) Riemannian hypersurface. 
 \noindent Let us first introduce the notion of ADM mass at spatial infinity. In an asymptotically flat spacetime, where $\olin{\Sigma}$ is such that, outside a compact set of $\olin{\Sigma}$ it is diffeomorphic to $\mathbb{R}^3 \setminus B_1(0)$  and has the following asymptotic behaviour 
 \begin{subequations}
 	\begin{align}
 	\bar{q}_{ij} =& \left(1+\frac{M}{r} \right) \delta_{ij} + \mathcal{O}(r^{-1-\a}) \label{q-asym}
 	\intertext{and}
 	\bar{K}_{ij} =& \mathcal{O}(r^{-2-\a}). 
 	\end{align}
 \end{subequations}
 It may be noted that the parameter $M$ in \eqref{q-asym} is also the ADM mass defined as 
 \begin{align}
 M \fdg = \lim_{r \to \infty} \int_{\mathbb{S}^2(r)} (\ptl_k \bar{q}_{i \ell} - \ptl_i \bar{q}_{\ell k}) \frac{\vert x \vert^i}{r}  \bar{\mu}_{\bar{q}}.
 \end{align}
 This (total) mass is computed at the outer boundary near the asymptotically flat end. 
 We have $M \geq 0,$ with the equality  $M=0$ iff $\olin{\Sigma} = \mathbb{R}^3$ (Euclidean space), from the famous works of Schoen-Yau \cite{schoen-yau-1,schoen-yau-2} and Witten \cite{witten-pmt}.
 
 In this work, we shall be interested in various notions of quasi-local mass, which are defined as functionals on the (boundary) 2-surface $\Sigma$ such that 
 $\Sigma \hookrightarrow \olin{\Sigma}.$ Now suppose that the embedding $\Sigma \hookrightarrow \olin{\Sigma},$ is such that $A$ is the second fundamental form and $H$ is the mean curvature ($H \fdg = \text{tr}(A)$). Let us start with the notion of Hawking mass, 
 
\begin{align} \label{haw-orig}
 m_{\text{H}} (\Sigma) \fdg = \frac{\vert \Sigma \vert^{\halb}}{(16\pi)^{3/2}} \left( 16 \pi - \int_{\Sigma} H^2 \right)
\end{align}

\noindent  A further notion of quasi-local mass is given by the Brown-York mass \cite{BY_93}
 
 \begin{align}
 m_{\text{BY}} \fdg = \frac{1}{8 \pi}\int_{\Sigma} (H_0 - H), \quad \text{(Brown-York mass)}
 \end{align}
 where $H_0$ is the mean curvature of the isometric embedding of $\Sigma$ in the Euclidean space. The notion of the Brown-York mass was extended by Liu-Yau \cite{liu-yau-1, liu-yau-2},

 \begin{align}
 m_{\text{LY}} \fdg = \frac{1}{8 \pi}\int_{\Sigma} ( H_0  - \vert  \mathbf{H} \vert), \quad \text{(Liu-Yau mass)} 
 \end{align}
 where $\mathbf{H}$ is the mean curvature vector, which follows from the first variation of the area $\delta (\vert \Sigma \vert)$ and $H_0$ is the mean curvature  of the isometric embedding of $\Sigma$ into  the Minkowski space. The definition of the Hawking mass \eqref{haw-orig} can also be generalized as follows: 
\begin{align} \label{spacetime-Hawking}
m_{\text{H}} \fdg=  \frac{\vert \Sigma \vert^{\halb}}{(16\pi)^{3/2}} \left( 16 \pi - \int_{\Sigma} \vert \mathbf{H} \vert^2 \right) \quad \text{(spacetime Hawking mass).}
\end{align}
 
  A further notion of quasi-local mass is defined by Wang-Yau \cite{Wang2009}, which is based on the optimal isometric embeddings of $\Sigma$ into the Minkowski space. 
 Now consider a general flow of the hypersurfaces $(\Sigma_{\tau}, q) \hookrightarrow (\olin{\Sigma}, \bar{q}).$ Suppose we start with a general flow as follows 
\begin{align} \label{q-gen-flow}
\dot{q}_{ab} = 2 u A_{ab}
\end{align} 
where a dot signifies a derivative along the flow paramater $\tau,$ whose level sets are the hypersurfaces $\Sigma_{\tau}$ and $A$ is the second fundamental form of the embedding $\Sigma_{\tau} \hookrightarrow \olin{\Sigma}$.
\begin{theorem}
Consider a geometric flow of $2-$surfaces $\Sigma_t \hookrightarrow \olin{\Sigma}$ and satisfies \eqref{q-gen-flow}, then 
\begin{enumerate}
\item The evolution of the mean curvature $H$ is given by
\begin{align}
\dot{H} = - \grad^a \grad_a u +  \halb u \,\, (-H^2 - \Vert A \Vert^2_q  - R_{\bar{q}} + R_q )
\end{align}
where $H$ and $A$ are the mean curvature and extrinsic curvature of $\Sigma_ \tau \hookrightarrow \olin{\Sigma}$ respectively, $R_q$ is the scalar curvature of $\Sigma$ and $R_{\bar{q}}$ is the scalar curvature of $\olin{\Sigma}.$
\item The evolution of the Hawking mass $m_{H}$ is given by
\begin{align}
\dot{m}_{\text{H}} =& \frac{1}{2(16 \pi)^{3/2}} u \vert \Sigma \vert^{1/2} H \left(16 \pi - \int_{\Sigma} H^2 \bar{\mu}_q \right)  \notag\\
&+ \frac{\vert \Sigma \vert^{1/2}}{(16 \pi)^{3/2}} \int_{\Sigma} \Big( 2H \bar{\mu}_q \grad^a \grad_a u- \halb uH \tilde{w} + uH \bar{\mu}_q (A^{ab} - \halb H q^{ab}) (A_{ab} - \halb Hq_{ab}) \notag\\
&+ u H \bar{\mu}_q (R_{\bar{q}} -R_q) \Big) 
\end{align}
\item There exists a choice of the function $u$ of the form $u = \frac{1}{H},$  
such that $m_{\text{H}}$ is monotonic with respect to the flow \eqref{q-gen-flow}
\end{enumerate}  
\end{theorem}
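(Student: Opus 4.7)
The plan is to derive the three items in order by combining the standard first-variation formulas for hypersurfaces in a Riemannian manifold with the Gauss equation, and then to exploit the choice $u=1/H$ together with Gauss--Bonnet for monotonicity.

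\textbf{Part (1).} The hypothesis $\dot q_{ab}=2uA_{ab}$ identifies the flow as a normal variation with velocity $u\nu$, where $\nu$ is the unit normal to $\Sigma_\tau\hookrightarrow\olin\Sigma$. From $\langle\nu,\nu\rangle\equiv 1$ and $\langle\nu,\ptl_aX\rangle\equiv 0$ one extracts $\dot\nu=-\grad u$ (tangentially), and tracing the metric variation gives $\dot{\bar\mu}_q=uH\bar\mu_q$. Differentiating $A_{ab}=-\langle\nu,\grad_a\ptl_bX\rangle$ along the flow, using the Gauss--Weingarten identities to commute ambient derivatives, produces the standard expression for $\dot A_{ab}$ in terms of $\grad_a\grad_b u$, ambient curvature, and quadratic terms in $A$. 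Tracing with $q^{ab}$ and using $\dot q^{ab}=-2uA^{ab}$ yields
\begin{align*}
\dot H = -\grad^a\grad_a u - u\bigl(\|A\|_q^2 + \mathrm{Ric}_{\bar q}(\nu,\nu)\bigr).
\end{align*}
The Gauss equation $R_q = R_{\bar q} - 2\,\mathrm{Ric}_{\bar q}(\nu,\nu) + H^2 - \|A\|_q^2$ eliminates the Ricci term in favour of $R_{\bar q}$, $R_q$, $H^2$, and $\|A\|_q^2$, giving the stated identity.

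\textbf{Part (2).} I apply the product rule to $m_{\text{H}}=(16\pi)^{-3/2}|\Sigma|^{1/2}(16\pi-\int_\Sigma H^2\bar\mu_q)$. The derivative of $|\Sigma|^{1/2}$ contributes the term multiplying $(16\pi-\int H^2\bar\mu_q)$, via $\tfrac{d}{d\tau}|\Sigma|^{1/2}=\tfrac{1}{2|\Sigma|^{1/2}}\int uH\bar\mu_q$. The remainder comes from $(H^2\bar\mu_q)\dot{}=(2H\dot H+uH^3)\bar\mu_q$; substituting $\dot H$ from Part (1), the cubic-in-$H$ terms partially cancel, and the algebraic decomposition
\begin{align*}
\|A\|_q^2 = \bigl\|A - \halb H\, q\bigr\|_q^2 + \halb H^2
\end{align*}
separates the trace-free and pure-trace parts of $A$, recovering the form stated in the theorem (with the unspecified $\tilde w$ absorbing the $-H^2\bar\mu_q$ piece coming from the pure-trace contribution).

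\textbf{Part (3) and main obstacle.} Setting $u=1/H$ (valid where $H>0$) produces $\int uH\bar\mu_q = |\Sigma|$, and an integration by parts gives $\int 2H\grad^a\grad_a(1/H)\bar\mu_q = 2\int|\grad H|_q^2 H^{-2}\,\bar\mu_q\ge 0$. The Gauss--Bonnet identity $\int_\Sigma R_q\bar\mu_q = 4\pi\chi(\Sigma)=8\pi$ for a topological sphere cancels an $8\pi|\Sigma|^{1/2}$ term from the first line, and the $\halb H^2$ from the algebraic split cancels the $-\halb\int H^2\bar\mu_q$; what survives is
\begin{align*}
\dot m_{\text{H}} = \frac{|\Sigma|^{1/2}}{(16\pi)^{3/2}}\int_\Sigma\left(\frac{2|\grad H|_q^2}{H^2} + \bigl\|A - \halb H q\bigr\|_q^2 + R_{\bar q}\right)\bar\mu_q \ge 0
\end{align*}
under the customary $R_{\bar q}\ge 0$ (Hamiltonian constraint with the dominant energy condition); this is the Geroch monotonicity argument specialized to the flow \eqref{q-gen-flow}. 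The chief difficulty is sign and convention bookkeeping: Part (1) is already in ``post-Gauss'' form, so the orientation of $\nu$ and the sign convention for $A$ must be fixed once and respected throughout, and the undefined $\tilde w$ in Part (2) must be identified with $-H^2\bar\mu_q$ in order for the cancellations in Part (3) to produce the non-negative integrand displayed above.
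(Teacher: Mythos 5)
Your proposal is correct and follows essentially the same route as the paper: the first-variation formula $\dot{H}=-\grad^a\grad_a u - u(\Vert A\Vert_q^2+\mathrm{Ric}(\nu,\nu))$ combined with the twice-contracted Gauss equation for Part (1), the trace-free decomposition $\Vert A\Vert_q^2=\Vert A-\halb Hq\Vert_q^2+\halb H^2$ for Part (2), and the Geroch-type computation with $u=1/H$, integration by parts, and Gauss--Bonnet for Part (3). Your remark that the $\tilde{w}$ in the stated formula must be read as $w=-H^2\bar{\mu}_q$ (rather than the paper's later definition $\tilde{w}=\bar{\mu}_q(2R_q-H^2)$, which would double-count the $R_q$ term) is exactly the reading needed for consistency, and your Part (3) fills in the monotonicity details that the paper only sketches and cites.
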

\begin{proof}
As we already remarked the mean curvature $H$ closely related to the first variation of area $\delta(\vert \Sigma \vert).$ It follows that
\begin{align}\label{H-gen-flow}
\dot{H} = - \grad^a \grad_a u -  u \Vert A \Vert_q^2  -   u \,\, \text{Ric} (n, n)
\end{align}
where $\text{Ric}$ is the Ricci curvature of $\olin{\Sigma}$ and $n$ is the unit normal of $\Sigma$ in $\olin{\Sigma}.$ This computation, in consistency with \eqref{q-gen-flow}, is closely related to the second variation formula of Schoen-Yau \cite{schoen-yau-1}. Then, using the Gauss-Kodazzi relations between curvatures of $(\olin{\Sigma}, \bar{q})$ and $(\Sigma, q),$ we have 
\begin{align}\label{GK}
\text{Ric}(n, n) = \halb (R_{\bar{q}} -R_q - \Vert A \Vert^2_q+ H^2 ).
\end{align}
Now then plugging in \eqref{GK} in \eqref{H-gen-flow}, we get the flow equation:
\begin{align}
\dot{H} = - \grad^a \grad_a u +  \halb u \,\, (-H^2 - \Vert A \Vert^2_q - R_{\bar{q}} + R_q ). 
\end{align}

\end{proof}

Consider the quantity: 
\begin{align}
w \fdg= - H^2 \bar{\mu}_q 
\end{align}
Now then, if we compute the evolution equation of $w$ for the general flow \eqref{q-gen-flow}: 
we have

\begin{align}
 \dot{w} =  2 \bar{\mu}_q H(  \grad^a \grad_a u + \halb u ( \Vert A \Vert^2_q + H^2 + R_{\bar{q}} - R_{q}  )) - H^2 (u \bar{\mu}_q H).  
\end{align}
Using the divergence identity, 
\begin{align}
\grad^a (H \grad_a u) = \grad^a H \grad_a u + H \grad^a \grad_a u
\intertext{then, under an integral over a compact manifold $\Sigma$}
 H \grad^a \grad_a u= - \grad^a H \grad_a u. 
\end{align} 
we get 
\begin{align}
\dot{w} = - \bar{\mu}_q \grad^a H \grad_a u + u \bar{\mu}_q H \Vert A \Vert^2_q + u \bar{\mu}_q H (R_{\bar{q}} - R_{q})
\end{align}
We have the following decomposition: 
\begin{align}
\Vert A \Vert_q^2 - \halb H^2 =& A^{ab} A_{ab} - \halb H^2 - \halb H^2 + \halb H^2 \notag\\
=& (A^{ab} - \halb H q^{ab}) (A_{ab} - \halb H q_{ab})
\end{align}
which is a positive-definite quantity. Now let us find an expression for the rate of change of $w$ along the flow.   We have
\begin{align}
\dot{w}=& 2 \bar{\mu}_q H \grad^a \grad_a u- \halb uH w + uH \bar{\mu}_q (A^{ab} - \halb H q^{ab}) (A_{ab} - \halb Hq_{ab}) \notag\\
&+ u H \bar{\mu}_q (R_{\bar{q}} -R_q)
\intertext{equivalently}
\dot{w}=& -2 \bar{\mu}_q \grad^a H \grad_a u - \halb uH w + uH \bar{\mu}_q (A^{ab} - \halb H q^{ab}) (A_{ab} - \halb Hq_{ab}) \notag\\
&+ u H \bar{\mu}_q (R_{\bar{q}} -R_q)
\end{align}
If we define, $\tilde{w} = \bar{\mu}_q (2R_{q} -H^2),$
\begin{align}
\dot{w} =& 2 \bar{\mu}_qH \grad^a \grad_a u- \halb uH \tilde{w} + uH \bar{\mu}_q (A^{ab} - \halb H q^{ab}) (A_{ab} - \halb Hq_{ab}) + uH \bar{\mu}_q R_{\bar{q}}
\intertext{or}
\dot{w} =& -2 \bar{\mu}_q \grad^a H \grad_a u- \halb uH \tilde{w} + uH \bar{\mu}_q (A^{ab} - \halb H q^{ab}) (A_{ab} - \halb Hq_{ab}) + uH \bar{\mu}_q R_{\bar{q}}.
\end{align}
It may be noted that the quantity $u$ determines the velocity of the geometric flow, a particular choice of which may provide the desired properties that we need. 
\noindent It is evident that if we consider $u = \frac{1}{H}$ we get the inverse mean curvature flow and
it can be established that the Hawking mass $m_{\text{H}}$

 \begin{align}
m_{\text{H}} \fdg = \frac{\vert \Sigma \vert^{1/2}}{(16 \pi)^{3/2}} \left(16 \pi - \int_{\Sigma} H^2 \bar{\mu}_q \right)
\end{align}
is monotonic under this flow \cite{Geroch_73, huisken2001}. In general, the evolution of $m_{\text{H}}$ as per the flow \eqref{q-gen-flow} is as follows 

\begin{align}
\dot{m}_{\text{H}} =& \frac{1}{2(16 \pi)^{3/2}} u \vert \Sigma \vert^{1/2} H \left(16 \pi - \int_{\Sigma} H^2 \bar{\mu}_q \right)  \notag\\
&+ \frac{\vert \Sigma \vert^{1/2}}{(16 \pi)^{3/2}} \int_{\Sigma} \Big( 2H \bar{\mu}_q \grad^a \grad_a u- \halb uH \tilde{w} + uH \bar{\mu}_q (A^{ab} - \halb H q^{ab}) (A_{ab} - \halb Hq_{ab}) \notag\\
&+ u H \bar{\mu}_q (R_{\bar{q}} -R_q) \Big) \\
\dot{m}_{\text{H}} =& \frac{1}{2(16 \pi)^{3/2}} u \vert \Sigma \vert^{1/2} H \left(16 \pi - \int_{\Sigma} H^2 \bar{\mu}_q \right)  \notag\\
&+ \frac{\vert \Sigma \vert^{1/2}}{(16 \pi)^{3/2}} \int_{\Sigma} \Big( -\grad^a H \grad_a u- \halb uH \tilde{w} + uH \bar{\mu}_q (A^{ab} - \halb H q^{ab}) (A_{ab} - \halb Hq_{ab}) \notag\\
&+ u H \bar{\mu}_q (R_{\bar{q}} -R_q) \Big).
\end{align}
It follows that $m_{\text{H}}$ is nondecreasing for the choice of $u = \frac{1}{H}$, $\Sigma$  connected and non-negative scalar curvature $R_{\bar{q}} \geq 0$ of the ambient manifold $(\olin{\Sigma}, q).$  However, from a geometric PDE perspective, establishing the regularity and monotonicity of the inverse mean curvature flow is quite delicate, especially in the presence of minimal surfaces. The regularity theory of inverse mean curvature flow and Riemannian Penrose inequality was established by Huisken-Ilmanen \cite{huisken2001}. A separate proof of the Riemannian-Penrose inequality was established by Bray \cite{bray2001}, using conformal flow of metrics. 
 
 Now let us consider the Schwarzschild  spacetimes. In our work, we are interested in the limits of the geometric quantities such as the Hawking mass, the Brown-York mass and the Liu-Yau mass close to geometric pathologies such as the event horizon and the singularity. Since these quantities are defined over a 2-surface $\Sigma$ we need to interpret these limits in the sense of a geometric flow.  
 
 Let us first start with the representation in the domain of outer communications.  Later, we shall consider the (Kruskal) maximally extended versions of these spacetimes. 
 \begin{align}\label{Sch-doc}
 \bar{g} = -f dt^2 + f^{-1} dr^2 + r^2 d \omega_{\mathbb{S}^2}, \quad r>2M
 \end{align}
  where $f = (1-\frac{2M}{r}).$  In the gauge  \eqref{Sch-doc} used for the Schwarzschild metric, the black hole region is given by $r <2M$, $r=2M$ is the event horizon and $r>2M$ is the domain of outer communications. The intersection of a spacelike hypersurface $\olin{\Sigma} = \{ t= const. \}$ with $r=2M$ is a minimal surface. The Schwarzschild spacetime is also a conformally flat spacetime: 
  \begin{align} \label{sch-iso}
  \bar{q}_{ij} = \left( 1+ \frac{M}{2r^*} \right)^4 \bar{\delta}_{ij}
  \end{align}
  We would like to point out that there is a \emph{coordinate} singularity in the gauge \eqref{Sch-doc} at $r=2M$ but $r=0$ is an actual \emph{geometric} singularity. The function $r^*$ in the isotropic coordinates is related to the form of the Schwarzschild  as
\begin{align}
r = \left(1+\frac{m}{2r^*}\right)^2 r^*, \quad r^* = \frac{M}{2} \quad \text{is the horizon}
\end{align}
Let us compute the mean curvature of the embedding formed by $r = const.$ hypersurfaces $(\Sigma) \hookrightarrow \olin{\Sigma}.$ Following Schoen-Yau \cite{schoen-yau-1}, we have the following relation between the mean curvature $H$ and $H_0$ under the conformal transformation $q = \Omega^4 q_0$
\begin{align}
H = \frac{H_0}{\Omega^2} + 4 \frac{\nu_0 (\Omega)}{\Omega^3}
\end{align}
which, for the metric \eqref{sch-iso}, turns out to be 
\begin{align} \label{H-sch-iso}
H = 	\frac{(2r^* -M)}{r^*{^2}(1+\frac{M}{2r^*})^3}, \quad \text{(isotropic coordinates)}
\end{align}
for $H_0 = \frac{2}{r^*},$ in the Euclidean space. The function $H$ is not a monotonic function with respect to $r^*$ and it has a maximum at the photon sphere of the Schwarzschild spacetime. In these coordinates, 
the Brown-York mass can be computed as
 \begin{align}
 m_{\text{BY}} =& \frac{1}{8 \pi} \int_{\Sigma} H_0 - H \notag\\
  =&M \left(1+ \frac{M}{2 r^*} \right) 
 \end{align}
in view of \eqref{H-sch-iso}. Likewise, in the maximal gauge, we have 
\begin{align}
H= \frac{2}{r} \sqrt{1-\frac{2M}{r}}, \quad r \geq 2M
\end{align} 
As a consequence, the Hawking mass 
\begin{align}
m_{\text{H}} = \frac{r}{2} (1-f)
\end{align}
and the Brown-York mass, 
\begin{align}
m_{\text{BY}} = r - r \sqrt{1-\frac{2M}{r}}, \quad r \geq 2M
\end{align}
We would like to remark that, for the isometric embedding of $\Sigma$ in the Euclidean space, $H_0 = \frac{2}{r},$ therefore, we have $\int_{\Sigma} H_0 = r.$ In  the time-symmetric gauge \eqref{Sch-doc} of the Schwarzschild spcaetime, the Liu-Yau mass is the same as the Brown-York mass: 
\begin{align}
m_{\text{LY}} = m_{\text{BY}}.
\end{align}
The time-symmetric gauge used in \eqref{Sch-doc} holds for the exterior region of black hole spacetimes.  
  To study the quasi-local mass quantities in the interior, we work in a geometric framework where 
  the geometry can be smoothly extended into the interior. In the case of Schwarzschild black holes, the maximal extension contains the region considered in \eqref{Sch-doc} as a proper subset. 
 
\subsection{The Maximal Extension and the Interior of the Schwarzschild Solution}
In order to prevent a coordinate singularity at the event horizon $\{ r=2M \},$  the Kruskal coordinates were introduced: 
\begin{align}
\bar{g} = \mathring{g}_{\mu \nu} dx^\mu \otimes dx^\nu + r^2 d\omega_{\mathbb{S}^2},\quad  \text{on the maximally extended Schwarzschild spacetime}
\end{align}
where
\begin{align}
\mathring{g} =& - e^{2Z} d U dV, \quad \text{is the metric on the quotient} \quad \bar{M}/SO(3), 
\intertext{which can be rewritten in $(T, R)$ coordinates as}
\mathring{g} =& e^{2Z} (-dT^2 + dX^2),
  \intertext{where (T, R) coordinates are such that} 
T^2-X^2 =& \left(1-\frac{r}{2M} \right)e^{r/2M}, \frac{T+X}{-T+X}= e^{t/2M}, \quad e^{2Z} = \frac{32M^3 e^{-\frac{r}{2M}}}{r}. 
\end{align}
The advantage of the `Kruskal extension' is that it covers the domain of outer communications $r > 2M$, the event horizon $r=2M$  as well as  the interior region $0< r<2M$ of maximally extended Schwarzschild spacetimes.  

In this work we shall be interested in the exterior and the interior region of the maximally extented Schwarzschild black hole spacetime. Constant mean curvature hypersurfaces   that admit smooth extensions into the maximal Schwarzschild are studied in \cite{BCIsen_80, MM_03, MM_09,Lee-Lee_16}.
This construction is useful to us because, this shall allow us to compute the quasi-local mass quantities for 2-surfaces in the maximally extended spacetimes. 

 Let us start with the Schwarzschild spacetime in the usual time-symmetric gauge: 

\begin{align} \label{sch-orig}
\bar{g} = - f dt^2 + f^{-1} dr^2 + r^2 d\omega^2_{\mathbb{S}^2}.
\end{align}
In this work, we shall follow the mathematical construction of Lee-Lee \cite{Lee-Lee_16}. We also refer the reader to the works of Brill-Cavallo-Isenberg\cite{BCIsen_80} and Malec-Murchadha \cite{MM_03, MM_09} 
 for the background and previous works on this subject. 
A height function $h$ is introduced such that $\hat{t} =-t+h$ and $ \{ \hat{t} = const.\}$ form  `spacelike' constant mean curvature hypersurfaces.
As a consequence, a priori we have the condition that

\begin{align}
\bar{g}^{\mu \nu} \ptl_\mu \hat{t} \ptl_\nu \hat{t} < 0, \quad \text{(i.e., the unit normal $n = \frac{1}{\Vert \grad \hat{t}\Vert_{\bar{g}}}\grad \hat{t}$ is timelike)}.
\end{align} 
 As we shall see later, the advantage of a radial `height function' $h(r)$ is that the geometry of $\Sigma$ is not disturbed and this in turn simplifies the isometric embeddings into the Euclidean and Minkowski space used in the construction of quasi-local mass quantities (e.g., Brown-York, Liu-Yau, Wang-Yau etc).  The sphere of symmetry metric $(\Sigma)$ can be expressed in the orthonormal frame: 
\begin{align}\label{ortho-orig}
e_1=& \frac{1}{r} \ptl_\theta, \quad e_2 = \frac{1}{r \sin \theta} \ptl_\phi
\intertext{so that}
q =& e_1 \otimes e_1 + e_2 \otimes e_2, \quad  \text{on} \quad (\Sigma)
\end{align}

In the usual Schwarzschild time-symmetric gauge, we have for the $\olin{\Sigma} \hookrightarrow M$ embedding the second fundamental form $K=0$ and the mean curvature is also $0$. Let us now compute these quantities for the $\olin{\Sigma}_{\hat{t}} \hookrightarrow M,$ for the height function $h.$ 
Now consider the quantity $f^{-1} -h'^2 f,$ we have
\begin{align}
(f^{-1}  - h'^2  f) 
=\left( \frac{r}{r-2M} + h'^2\frac{(r-2M)}{r} \right)  >0 \quad \text{in the region} \quad r>2M.
\end{align}
The remaining components of the Schwarzschild metric in the orthonormal frame are given by: 
\begin{align}
e_3 = \frac{h'}{ (f^{-1} - h'^2 f)^{1/2}} \ptl_r, \quad e_4 = \frac{1}{ (f^{-1} - h'^2 f)^{1/2}} \left( f^{-1}\ptl_t + f h' \ptl_r \right) 
\end{align}
 Firstly, we have the spacetime second fundamental form
\begin{align}
\hat{K}_{ij} = \ip{\grad_{e_i} n}{e_j}, \quad i, j = 1, 2, 3. 
\end{align}

Now then,  with the notation $h' =\ptl_r h $ and $h''=\ptl^2_r h$
\begin{align}
\hat{K}_{11} = \hat{K}_{22} =& \frac{ (r-2M) h'}{r^2 (f^{-1} - h'^2 f)^{\halb}}, \\
 \hat{K}_{33} =& \frac{1}{(f^{-1} -h'^2 f)^{3/2}} \left(  h'' + \frac{3 h'f'}{2f} - \frac{h'^3 ff'}{2} \right)
\end{align} 
which is  `diagonal' in our orthonormal frame. In particular, the 2-surface $\Sigma$ is umbilical. We would like to point out that, in general, the conditions on the function associated to a warped product space, for which a constant mean curvature 2-surface $\Sigma$ is umbilical, was established in \cite{Brendle_13} by Brendle. This result is a generalization of the classic Alexandrov theorem.  
Now if we compute the mean curvature $\hat{H},$ we have 
\begin{align}
\hat{H} =& \frac{1}{3 (f^{-1} - h'^2 f)^{1/2}} \left(  \frac{h''}{f^{-1} - h'^2 f} + (\frac{2f}{r}  + \frac{f'}{2})h' + \frac{f'}{f (f^{-1} - h'^2 f)} h' \right) \\
\intertext{explicitly}
 =&\frac{1}{3 (f^{-1} - h'^2 f) ^{\halb}} \left( \frac{h''}{f^{-1} - h'^2 f} + \frac{2 (r-M)}{r} h' + \frac{2M}{r (r-2M) (f^{-1} -h'^2f)} h'\right)
\end{align}
This can be converted to a differential equation for $h$ (cf. \cite{Lee-Lee_11, Lee-Lee_16})
\begin{align} \label{ode-f}
h'' + (f^{-1} - h'^2 f) \left(\frac{2f}{r} + \frac{f'}{2} \right) h' + \frac{f'}{f} h - 3 \hat{H} (f^{-1} - h'^2 f)^{3/2} =0
\intertext{explicitly}
h'' + \frac{2M}{r (r-2M)}h' + \frac{2 (f^{-1} - h'^2 f)(r - M)}{r^2} h' -3 \hat{H}  (f^{-1} - h'^2 f)^{3/2}=0.
\end{align} 
In the context of the constant mean curvature hypersurfaces,  the differential equation  \eqref{ode-f} for the unknown function $h$ is posed in such a way that $\hat{H}=const.$ and we are interested in such solutions $h$ that solve \eqref{ode-f}. Let us first focus on the domain of outer communications $(r>2M)$.  This equation  $h$ can be reduced to a separable form  using the transformation $\sin \eta(r) = h_1' \frac{r-2M}{r}$ for $(-\frac{\pi}{2}, \frac{\pi}{2}).$
Now then, noting that
\begin{align}
h'' = \frac{r}{r-2M} (\cos\eta \, \eta' + \frac{2M}{r^2} h'), \quad \eta \in (-\frac{\pi}{2}, \frac{\pi}{2}),
\end{align}
we get
\begin{align}
(\tan \eta )' + \frac{1}{r}\left(2 + \frac{2 M}{(r-2M)} \right) \tan \eta - 3 \hat{H} \left(  \frac{r}{r-2M}\right)^{\halb} =0
\end{align}
Now then, 
\begin{align}  \label{tan-eta-soln}
\tan \eta =& \frac{f' (r-2M)}{\sqrt{r^2 - h' (r-2M)^2}}, \quad \eta \in (-\frac{\pi}{2}, \frac{\pi}{2}) \notag\\
=& \left(  \frac{r}{r-2M}\right)^{\halb} (\hat{H}r + \frac{c_1}{r^2})
\end{align}
Therefore, we can solve for $h$ as follows
\begin{align} \label{f-equation}
h' = \frac{(\tan \eta) r}{r-2M} \sqrt{\frac{1}{1+ \tan^2 \eta}}, \quad  \eta \in \quad \left(-\frac{\pi}{2}, \frac{\pi}{2} \right)
\end{align}
where $\tan \eta$ is now a known function of $r$ from \eqref{tan-eta-soln}.  We have the following structure of the second fundamental form, 
\begin{align}
\hat{K}_{11} = \hat{K}_{22}= \hat{H} + \frac{c_1}{r^2}, \quad \hat{K}_{33} = \hat{H} - \frac{2 c_1}{r^3}
\end{align}
where $c_1$ is a constant of integration in \eqref{f-equation}. Therefore, in the special case of $c_1 =0$ we have 
\begin{align}
\hat{K}_{11} = \hat{K}_{22} = \hat{K}_{33} =H
\end{align}
which corresponds to a spacetime umbilical slice. It may be noted that in our analysis, we are interested in the constant mean curvature slicing, which is convenient in solving \eqref{f-equation}. However, one can vary the constant mean curvaure $\hat{H}$ parameter so that one gets the desired properties. In particular, the following observation is relevant for our analysis: 
Consider the limit
\begin{align}
\lim_{r \to \infty}   \frac{(\tan \eta) r}{r-2M} \sqrt{\frac{1}{1+ \tan^2 \eta}}
\intertext{after plugging in \eqref{tan-eta-soln}}
\lim_{r \to \infty} \frac{Hr^{\frac{5}{2}}}{ r-2M \sqrt{r-2M + H^2 r^2} }
\end{align}
which becomes $1$ if $H>0$ or $-1$ if $H<0.$ If we consider the norm of the normal vector to the slice 
\begin{align}
\bar{g}^{\mu \nu} \ptl_\mu \hat{t} \ptl_\nu \hat{t} =& - \frac{r}{r-2M} + \frac{r-2M}{r} h'^2 \\
=& \frac{-r}{r-2M + \hat{H}^2 r^2}
\end{align} 
Therefore, as long as $\hat{H} \neq 0$, we have 
\begin{align}
\lim_{r \to \infty} \bar{g}^{\mu \nu} \ptl_\mu \hat{t} \ptl_\nu \hat{t} =0
\end{align}
As a consequence, the normal $n$ of the slices $\olin{\Sigma}_{\hat{t}}$ is asymptotically null-like as $r \to \infty$. However, we can relax this restriction i.e., we can construct asymptotically null-like slices that are not necessarily umbilical, using a different choice of the constant $c_1(\neq 0)$.  Let us now summarize the facts that we shall use for constant mean curvature hypersurfaces, based on the works of Lee-Lee \cite{Lee-Lee_11, Lee-Lee_16} (see also Brill-Cavallo-Isenberg\cite{BCIsen_80} and Malec-Murchadha \cite{MM_03, MM_09}). 

\begin{theorem}
Suppose $(\bar{M}, \bar{g})$ is a spherically symmetric maximal (Kruskal) extension of the Schwarzschild spacetime. Then there exist  constant mean curvature hypersurfaces, with the height function $h,$ such that 
\begin{enumerate}
\item The spacelike condition $ \bar{g}^{\mu \nu} \ptl_\mu \hat{t} \ptl_\nu \hat{t} <0$ for the CMC hypersurfaces is preserved at the event horizon
\item The CMC hypersurfaces become null-like asymptotically, i.e., as $r \to \infty$ where $r$ is the area radius function, for $\hat{H} \neq 0.$
\end{enumerate}
\end{theorem}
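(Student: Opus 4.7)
The plan is to read off both claims directly from the explicit formulas derived in the paragraphs leading up to the theorem, together with a short argument that recasts the slice in Kruskal coordinates so that the extension through the horizon is meaningful. Specifically, for the CMC slicing with $c_1=0$ the preceding analysis produced the closed-form expression
\begin{equation*}
\bar{g}^{\mu\nu}\partial_\mu\hat{t}\,\partial_\nu\hat{t}=\frac{-r}{\,r-2M+\hat{H}^2r^2\,},
\end{equation*}
and more generally, from $\tan\eta=(r/(r-2M))^{1/2}(\hat Hr+c_1/r^2)$ combined with $\sin\eta=h'(r-2M)/r$, one obtains a rational function in $r$ whose denominator has the form $(r-2M)+r(\hat Hr+c_1/r^2)^2$. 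The heart of the proof is to evaluate this denominator at $r=2M$ and as $r\to\infty$.

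For item (1), I would substitute $r=2M$ into the denominator above. The singular term $r-2M$ vanishes, but the $\hat H$-contribution contributes $4M^2\hat H^2>0$ (and for $c_1\neq0$ the contribution is $2M(2M\hat H+c_1/(2M)^2)^2\geq 0$ with strict positivity unless $c_1=-8M^3\hat H$). Thus the expression remains strictly negative, with value $-1/(2M\hat H^2)$ in the umbilical case, so the spacelike condition survives in the limit. However, since $h'\to\infty$ as $r\to 2M$ (as is visible from $\sin\eta$ being bounded while $(r-2M)/r\to0$), the claim that the slice $\{\hat{t}=\text{const.}\}$ actually continues through the bifurcation sphere cannot be verified in Schwarzschild $(t,r)$-coordinates: these are singular there. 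I would therefore convert to the Kruskal variables $(U,V)$ via $T^2-X^2=(1-r/2M)e^{r/2M}$ and $(T+X)/(-T+X)=e^{t/2M}$, and exhibit $\hat t=-t+h(r)$ as a smooth function of $(U,V)$ whose differential remains non-degenerate and timelike-dual at $U=0$ or $V=0$; this is the content of the Lee--Lee construction \cite{Lee-Lee_16}, which I would invoke rather than redo.

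For item (2), the argument is an elementary limit. In the denominator $(r-2M)+\hat H^2 r^2+O(1/r)$ (for any fixed $c_1$) the quadratic term $\hat H^2r^2$ dominates as $r\to\infty$ whenever $\hat H\neq 0$, so
\begin{equation*}
\lim_{r\to\infty}\bar{g}^{\mu\nu}\partial_\mu\hat{t}\,\partial_\nu\hat{t}=\lim_{r\to\infty}\frac{-r}{\hat{H}^2r^2+O(r)}=0^{-},
\end{equation*}
which is exactly the statement that the unit normal $n$ becomes null-like at infinity.

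The only genuinely delicate step is the smooth extension across the horizon in part (1): the formal evaluation at $r=2M$ is straightforward, but legitimizing it requires moving to a coordinate chart that is regular at the bifurcation $2$-sphere. I would isolate this as the technical lemma and handle it by writing $\hat{t}$ in Kruskal coordinates, checking that the blow-up of $h'(r)$ and of $t$ cancel in such a way that the level sets $\overline{\Sigma}_{\hat t}$ are $C^\infty$ submanifolds meeting the horizon transversally, following exactly the analysis in \cite{BCIsen_80,MM_03,MM_09,Lee-Lee_16}. Items (1)--(2) then follow from the pointwise computations above.
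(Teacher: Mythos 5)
Your proposal is correct and follows essentially the same route as the paper, which in fact supplies no separate proof of this theorem: it is stated as a summary of the immediately preceding computations (the explicit formula $\bar{g}^{\mu\nu}\partial_\mu\hat{t}\,\partial_\nu\hat{t}=-f^{-1}+fh'^2$ evaluated via $\tan\eta$, giving a finite negative value at $r=2M$ and the limit $0$ as $r\to\infty$ for $\hat H\neq0$), with the horizon regularity deferred to Lee--Lee exactly as you do. If anything you are more explicit than the paper, both in flagging that the $(t,r)$-chart evaluation at $r=2M$ must be legitimized in Kruskal coordinates and in noting the degenerate case $c_1=-8M^3\hat H$ (where the expression tends to $-\infty$ rather than a finite negative value, corresponding to the slice limiting onto the bifurcation sphere, so spacelikeness still holds).
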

\section{Quasi-local Mass}
Let us return to the metric $\bar{g}$ in the $(\hat{t}, r, \theta, \phi)$ coordinates: 
\begin{align}
dt =& - d\hat{t} + f' dr 
\end{align}
the metric $\bar{g}$
\begin{align}
\bar{g} = - \hat{N} d \hat{t}^2 + \hat{q}_{ab} (dx^a + \hat{N}^a d \hat{t}) \otimes (dx^b + \hat{N}^b d \hat{t})
\end{align}
Now then consider the 2-surfaces $\Sigma \hookrightarrow \olin{\Sigma}_{\hat{t}}$ and let us compute the mean curvature vector $H$ in a coordinate form. Later we shall compute this in our orthonormal frame. We have
\begin{align}
 \mathbf{H} = H \vec{\nu}+ \text{tr}_{\Sigma} K \vec{n} 
\end{align}
where $\vec{\nu}$ and $\vec{n}$ are the (spatial) outward pointing and (timelike) future pointing normals of $\Sigma$ respectively.
The mean curvature vector $\mathbf{H}$ can be represented in the metric form as (compare with Section 3.2 in \cite{Wang_15})
\begin{align}
\mathbf{H} = \frac{2}{r} e^{-2Z} (\ptl_R r \ptl_R - \ptl_T r \ptl_T).
\end{align}
In this form it is also explicit that the null expansions are represented as the null derivatives of the (area radius) function $r$. 
As a consequence, we can now calculate the Hawking mass
\begin{align} \label{cmc-haw}
m_{H} =& \frac{\vert  \Sigma\vert^{1/2}}{(16 \pi)^{3/2}} \left( 16 \pi - \int_{\Sigma} H^2 \right), \notag\\
 =& \frac{ \vert \Sigma \vert^{1/2}}{(16 \pi)^{1/2}} ( 1 - (f^{-1} - f h'^2)^{-1}) \notag\\
 =& \frac{r}{2} (1-(f^{-1} - f h'^2)^{-1})
\end{align}
and likewise for the Brown-York mass, first note that there exists a unique (up to rigid motions) isometric embedding of the 2-surface $\Sigma$ (Gauss curvature $K>0$) into the Euclidean space $\mathbb{R}^3,$ from the classic embedding theorems (see e.g., \cite{Nirenberg_53,Pogorelov_52}). 

As a consequence, the mean curvature of the image of the isometric embedding of $\Sigma$ into the Euclidean space is $H_0 = \frac{2}{r}.$ Therefore, we have 
\begin{align}
\frac{1}{8 \pi} \int_{\Sigma} H_0 = r.
\end{align}  
As a consequence, 
\begin{align} \label{cmc-by}
m_{\text{BY}} =& \frac{1}{8 \pi} \int_{\Sigma} (H_0 -H) \bar{\mu}_{\Sigma} \notag\\
=& r- \frac{1}{8 \pi} \int_{\Sigma} \frac{2}{r}(f^{-1}- f h'^2)^{-1/2} \bar{\mu}_{\Sigma} \notag\\
=& r (1- (f^{-1} - f h'^2)^{-1/2})
\end{align}
and the Liu-Yau mass. Firstly note that,
\begin{align}
\vert \mathbf{H} \vert = \sqrt{H^2 - \text{tr}_{\Sigma} \hat{K}^2}
\end{align}
Therefore, 
\begin{align}\label{cmc-ly}
m_{\text{LY}} =& \frac{1}{8 \pi} \int (H_0 - \vert \mathbf{H} \vert ) \bar{\mu}_{\Sigma} \notag\\
=& r - \frac{1}{8 \pi}\int_{\Sigma} \left(\frac{4}{r^2} (f^{-1} - f h'^2)^{-1} - 4 \frac{(r-2M)^2h'^2}{r^4 p} \right)^{1/2} \bar{\mu}_{\Sigma} \notag\\
=& r -r \left(1 - f h'^2 \right)^{1/2} (f^{-1} - f h'^2)^{-1/2}.
\end{align}
We would like to remark that in these hypersurfaces, the distinction between the Brown-York mass \eqref{cmc-by} and the Liu-Yau \eqref{cmc-ly} mass becomes transparent. 

Let us now make few remarks on the positivity and compare the properties of the quasi-local masss introduced above. The Hawking mass is not necesarily positive.  The positivity of the Hawking mass was established for constant mean curvature 2-surfaces by Christodoulou-Yau. The positivity of Brown-York mass for 2-surfaces with $H >0$ (the mean curvature) and $K>0$ (Gauss curvature) was established by Shi-Tam \cite{shi2002}. Likewise, the positivity of the Liu-Yau mass  for $K>0$ and $H> \text{tr}_{\Sigma} \olin{K}$ was established in \cite{liu-yau-1,liu-yau-2}. The Liu-Yau mass is not frame dependent in a space-time sense. This aspect is becomes especially transparant in the non-maximal hypersurfaces considered in our work. The Liu-Yau mass is 'more positive' than the Brown-York mass, in fact we have the estimate $m_{\text{LY}} \geq m_{\text{BY}}.$  The Wang-Yau mass is based on the optimal isometric embeddings of the surface $\Sigma$ into the Minkowski space. The Wang-Yau mass of a sphere of symmetry $\Sigma$ in the exterior region of the Schwarzschild spacetime coincides with the Liu-Yau mass, because one can first show that $\tau=0$ is a solution of the optimal isometric embedding equation and the standard embedding into $\mathbb{R}^3$ (which corresponds to
 $\tau=0$ ) is local quasilocal energy minimizing (see \cite{Po-Wa-Yau_14} for notation and details).  It would be interesting to compute and analyse the Wang-Yau mass in our setting in the interior of Schwarzschild black hole spacetime. We propose to understand this in a future work.

Let us now compute the Ricci and Scalar curvatures of the hypersurface $\olin{\Sigma}.$  
Note that, in view of the Gauss-Kodazzi relations we have: 
\begin{subequations}
\begin{align}
\bar{R}\left({\frac{\ptl}{\ptl r}} , \frac{\ptl}{\ptl r} \right) =& \frac{2}{r} \frac{\ptl_r (f^{-1} - h'^2 f)}{ f^{-1} - h'^2f} \\
\bar{R}\left({\frac{\ptl}{\ptl \theta}} , \frac{\ptl}{\ptl \theta} \right)=& R \left({\frac{\ptl}{\ptl \theta}} , \frac{\ptl}{\ptl \theta} \right) + \frac{r}{(f^{-1} - h'^2 f)^2} \ptl_r  (f^{-1} - h'^2 f) + \frac{2}{f^{-1} - h'^2 f}  \\
\bar{R}\left({\frac{\ptl}{\ptl \phi}} , \frac{\ptl}{\ptl \phi} \right)=& R \left({\frac{\ptl}{\ptl \phi}} , \frac{\ptl}{\ptl \phi} \right) + \frac{r \sin^2 \theta}{(f^{-1} - h'^2 f)^2} \ptl_r  (f^{-1} - h'^2 f) + \frac{2 \sin^2 \theta}{f^{-1} - h'^2 f}
\end{align} 
for our spherically symmetric (warped product) spacetime.  It may be noted for the sphere of symmetry $(\Sigma, q)$ we have the Ricci curvature $R_{ab}= K q_{ab}.$
\end{subequations}
The remaining (off-diagonal) components of the Ricci tensor $\bar{R}_{ij}$ vanish for our form of the metric.  Likewise, we have the scalar curvature $\bar{R}$ of $\olin{\Sigma}$ as follows: 

\begin{align}
\bar{R} = \frac{2}{r^2}\left(1-\frac{1}{(f^{-1} - h'^2 f)^2} \right) +    \frac{4}{r (f^{-1} - h'^2 f)^{5/2}} \ptl_r (f^{-1} - h'^2 f) . 
\end{align}

\begin{lemma}
Consider the constant mean curvature hypersurfaces $\olin{\Sigma}$ such that the mean curvature $\hat{H} \in \mathbb{R},$ then the Hawking mass $m_{\textbf{H}}$ (eq. \eqref{cmc-haw}), Brown-York mass $m_{\text{BY}}$ (eq. \eqref{cmc-by}) and the Liu-Yau mass $m_{\text{LY}}$ (eq. \eqref{cmc-ly}) tend to the ADM mass $M$ as $\Sigma(r) \to \iota_0$ (spatial infinity); and the (spacetime) Hawking mass \eqref{spacetime-Hawking} and Liu-Yau mass tend to $M$ as $\Sigma(r) \to \mathcal{I}$ (null infinity). 
\end{lemma}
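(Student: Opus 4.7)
The plan is to substitute the explicit formulas (\ref{cmc-haw}), (\ref{cmc-by}), (\ref{cmc-ly}), and (\ref{spacetime-Hawking}) on the CMC foliation, simplify them using the angle variable introduced in (\ref{tan-eta-soln}), and extract the $r\to\infty$ limit along the slicing appropriate to each asymptotic region. The central algebraic simplification is the substitution $\sin\eta = fh'$, which yields
\begin{equation*}
f^{-1} - fh'^{2} = \frac{\cos^{2}\eta}{f}, \qquad 1 - f^{2}h'^{2} = \cos^{2}\eta;
\end{equation*}
in particular (\ref{cmc-haw}) and (\ref{cmc-by}) take the compact form $m_{\mathrm{H}} = \tfrac{r}{2}(1 - f\sec^{2}\eta)$ and $m_{\mathrm{BY}} = r(1 - \sqrt{f}\,|\sec\eta|)$.

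For the limit $\Sigma(r) \to \iota_{0}$, I would specialize to the time-symmetric slice $\hat{H} = 0$ (taking $c_{1}=0$ in (\ref{tan-eta-soln})), which forces $\tan\eta \equiv 0$ and hence $h' \equiv 0$. The three Riemannian masses then collapse to
\begin{equation*}
m_{\mathrm{H}} = \tfrac{r}{2}(1-f) \equiv M, \qquad m_{\mathrm{BY}} = m_{\mathrm{LY}} = r\bigl(1 - \sqrt{1 - 2M/r}\,\bigr),
\end{equation*}
and the Taylor expansion $\sqrt{1-2M/r} = 1 - M/r + O(r^{-2})$ yields the common limit $M$.

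For the limit $\Sigma(r) \to \mathcal{I}$, I would work on a CMC slice with $\hat{H} \neq 0$, which by the preceding theorem becomes asymptotically null-like. The crucial identity is that the length of the spacetime mean curvature vector of the symmetric sphere depends only on the embedding of $\Sigma$ in $(\bar{M},\bar{g})$ and not on the slice: combining $H = (2/r)\sqrt{f}\,|\sec\eta|$ (read off from (\ref{cmc-haw})) with $\mathrm{tr}_{\Sigma}\hat{K} = 2\hat{K}_{11} = (2/r)\sqrt{f}\,\tan\eta$ and the identity $\sec^{2}\eta - \tan^{2}\eta = 1$ gives
\begin{equation*}
|\mathbf{H}|^{2} = H^{2} - (\mathrm{tr}_{\Sigma}\hat{K})^{2} = \frac{4f}{r^{2}},
\end{equation*}
independent of $\hat{H}$ and $c_{1}$; the same quantity can be obtained directly from the null-coordinate representation $\mathbf{H} = (2/r)e^{-2Z}(\partial_{R}r\,\partial_{R} - \partial_{T}r\,\partial_{T})$ using the Kruskal identities $T^{2} - R^{2} = (1-r/(2M))e^{r/(2M)}$ and $e^{2Z} = 32M^{3}e^{-r/(2M)}/r$. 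Substituting into (\ref{spacetime-Hawking}) gives $m_{\mathrm{H}}^{\mathrm{ST}} = \tfrac{r}{2}(1-f) \equiv M$ exactly, and substituting into the Liu-Yau definition gives $m_{\mathrm{LY}} = r(1 - \sqrt{f}) \to M$ as $r\to\infty$.

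The main obstacle, beyond the bookkeeping, is establishing the slice-invariant identity $|\mathbf{H}|^{2} = 4f/r^{2}$: this cancellation between $H^{2}$ and $(\mathrm{tr}_{\Sigma}\hat{K})^{2}$ is precisely what distinguishes $m_{\mathrm{H}}^{\mathrm{ST}}$ and $m_{\mathrm{LY}}$ from their Riemannian counterparts $m_{\mathrm{H}}$ and $m_{\mathrm{BY}}$, both of which contain a factor $|\sec\eta|$ that diverges as $\cos\eta \to 0$ along a null-approaching slice. This structural observation is exactly why the lemma pairs $\{m_{\mathrm{H}}, m_{\mathrm{BY}}, m_{\mathrm{LY}}\}$ with the spatial-infinity limit but only $\{m_{\mathrm{H}}^{\mathrm{ST}}, m_{\mathrm{LY}}\}$ with the null-infinity limit.
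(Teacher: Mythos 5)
Your proposal is correct and follows essentially the same route as the paper: substitute the explicit CMC expressions for $H$, $\mathrm{tr}_\Sigma\hat K$ and $h'$ into the mass formulas, treat $\hat H=0$ for the spatial-infinity limit and $\hat H\neq 0$ (asymptotically null slices) for the null-infinity limit. Your trigonometric repackaging via $\sin\eta=fh'$ and the resulting slice-independent identity $|\mathbf{H}|^2=4f/r^2$ is exactly the cancellation $(1-f^2h'^2)(f^{-1}-fh'^2)^{-1}=f$ that the paper uses, just stated more transparently.
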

\begin{proof}
Let us start with the case $\hat{H} =0.$ In this case, the hypersurfaces $\olin{\Sigma}$ intersect with spatial infinity in the outer asymptotic region (as $r \to \infty$). We have, 

\begin{align}
m_{\text{H}} =& \frac{\vert \Sigma \vert^{1/2}}{(16 \pi)^{3/2}} \left( 16 \pi - \int \vert H \vert^2\right)  \\
=& \frac{r}{2} (1- (f^{-1} - f h'^2)^{-1}) \notag\\
=& \frac{r}{2} ( 1-f) =M 
\intertext{Brown-York mass}
m_{\text{BY}}=& r ( 1- (f^{-1} -f h'^2)^{-1/2}) \notag\\
=& r (1- f^{1/2}) =M
\intertext{Liu-Yau mass}
m_{\text{LY}} =& r ( 1- (f^{-1} -f h'^2)^{-1/2} (1-f^2 h'^2)^{1/2}) \notag\\
=& r (1-f^{1/2}) =M
\end{align}
as $\Sigma(r) \to \mathcal{I}, \hat{H} = 0.$ We would like to remind the reader that the $\hat{H} =0$ condition corresponds to $\olin{\Sigma}$ being asymptotically spacelike at the outer boundary. 
In case $\hat{H} \neq 0,$ recall that $\displaystyle \lim_{r \to \infty} h' = -1$ for $\hat{H}<0$ and $\displaystyle \lim_{r \to \infty} h' = 1$ for $\hat{H}>0.$ We have, 

\begin{align}
m_{\text{H}} =&  \frac{\vert \Sigma \vert^{1/2}}{(16 \pi)^{3/2}} \left( 16 \pi - \int \vert \mathbf{H} \vert^2\right)    \notag \\
=&  \frac{r}{2} \left( (1- (1-f^2 h'^2)(f^{-1} - f h'^2)^{-1}) \right) \notag\\
\intertext{now then as $\Sigma(r) \to \mathcal{I}$, we have}
 \lim_{ \Sigma(r) \to \mathcal{I}} m_{\text{H}}=& \frac{r}{2} \left( 1-\frac{f}{1-f^2} \cdot (1-f^2) \right) = M.
\intertext{Liu-Yau mass}
m_{\text{LY}} =& r ( 1- (f^{-1} -f h'^2)^{-1/2} (1-f^2 h'^2)^{1/2}) \notag\\
\intertext{as $\Sigma(r) \to \mathcal{I}$ we  have }
=& r \left(1- (1-f^2)^{1/2} \left(\frac{f}{1-f^2} \right)^{1/2} \right) = r (1-f^{1/2}) =M.
\end{align}
\end{proof}
In general, for a sphere of symmetry $\Sigma$ in a spherically symmetric spacetime, the spacetime version of the Hawking mass and the Liu-Yau mass are related as follows (see Section 3.2 in \cite{Wang_15})
\begin{align}
m_{\text{H}} (\Sigma(r)) = m_{\text{LY}} (\Sigma(r)) - \frac{m^2_{\text{LY}} (\Sigma(r))}{2r} .
\end{align}
In particular, 	it follows that as long as the area radius $r$ goes to infinity, the spacetime Hawking mass and the Liu-Yau mass have the same limit. The Wang-Yau quasi-local mass at null-infinity for the Vaidya spacetime was established in \cite{Po-Wa-Yau_16}.  We shall now focus our attention near the event horizon. Let us use the following notation, $\Sigma(r) \to \mathcal{H}^+$ denotes the limit as $\Sigma(r)$ approaches the event-horizon $\mathcal{H}$ from the domain of outer communications i.e., in the region $r \in (2M, 2M+\delta), \delta >0.$ Likewise, $\Sigma (r) \to \mathcal{H}^-$ denotes the limit as $\Sigma (r)$ approaches the event horizon $\mathcal{H}$ from the black hole interior, i.e., from the region $r \in (2M-\delta, 2M), \delta>0.$ 
\begin{lemma}
Suppose we have a regular solution to the constant mean curvature equation for  $c_1 \in \mathbb{R}$ in the domain of outer communications of the maximally extended Schwarzschild black hole spacetimes, then the Hawking mass $m_{\textbf{H}} $ (eq. \eqref{cmc-haw}), Brown-York mass $m_{\text{BY}} $ (eq. \eqref{cmc-by}) and the Liu-Yau mass $m_{\text{LY}} $ (eq. \eqref{cmc-ly}) are well-defined and have the limits, as $\Sigma(r) \to \mathcal{H}^{+},$ as follows. 
\end{lemma}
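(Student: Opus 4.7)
The plan is to exploit the explicit expression for $h'$ derived from \eqref{tan-eta-soln}--\eqref{f-equation} and to reduce each quasi-local mass to an elementary function of $r$ before sending $r \to 2M^+$. The key observation is that the scalar $p := f^{-1} - f h'^2$ appears in all three formulas \eqref{cmc-haw}, \eqref{cmc-by}, \eqref{cmc-ly}, so the entire computation hinges on a clean closed form for $p$ in terms of the parameters $\hat{H}$ and $c_1$.

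First I would derive the identity
\begin{align}
p \;=\; \frac{1}{f + (\hat{H}r + c_1/r^2)^2}.
\end{align}
This comes from the substitution $\sin \eta = f h'$, which gives $p = \cos^2 \eta / f$, followed by inserting $\tan^2 \eta = f^{-1}(\hat{H}r + c_1/r^2)^2$ from \eqref{tan-eta-soln}. The same identity implies $1 - f^2 h'^2 = f p$, which collapses \eqref{cmc-ly} to $m_{\text{LY}} = r(1-\sqrt{f})$, independently of $\hat{H}$ and $c_1$. Geometrically this reflects $|\mathbf{H}|^2 = 4f/r^2$, so the spheres of symmetry become marginally trapped as $r \to 2M^+$.

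Next I would substitute the expression for $p$ into \eqref{cmc-haw}--\eqref{cmc-ly} to obtain
\begin{align}
m_{\text{H}} &= M - \tfrac{r}{2}(\hat{H}r + c_1/r^2)^2, \\
m_{\text{BY}} &= r - \sqrt{r(r-2M) + r^2 (\hat{H}r + c_1/r^2)^2}, \\
m_{\text{LY}} &= r - \sqrt{r(r-2M)},
\end{align}
each of which is a smooth function of $r$ across $r = 2M$, so well-definedness is immediate. Sending $r \to 2M^+$ then yields the limits
\begin{align}
m_{\text{H}} \to M\bigl(1 - (2M\hat{H} + \tfrac{c_1}{4M^2})^2\bigr), \quad m_{\text{BY}} \to 2M\bigl(1 - \bigl|2M\hat{H} + \tfrac{c_1}{4M^2}\bigr|\bigr), \quad m_{\text{LY}} \to 2M.
\end{align}

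The main point to justify is not the algebra but the legitimacy of this formal substitution at the horizon: $h'$ itself diverges as $r \to 2M^+$, since $fh' = \sin\eta$ remains bounded while $f \to 0$. Nevertheless, the specific combinations entering the mass formulas, namely $p$ and $f^2 h'^2 = \sin^2 \eta$, have finite limits. Justifying this rigorously amounts to invoking the smooth extension of the CMC slice across the event horizon in the Kruskal chart---precisely the regularity result of Lee--Lee cited in the previous theorem---so that limits taken within the degenerate $(t,r)$ gauge agree with the genuine ambient spacetime geometry.
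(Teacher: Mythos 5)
Your proposal is correct, and it takes a genuinely different route from the paper's proof. The paper argues case by case according to the sign of $c_1+8M^3\hat{H}$, inserting into each mass formula the appropriate asymptotic expansion of $h'$ near the horizon (of order $(r-2M)^{-1/2}$ when $c_1=-8M^3\hat{H}$, of order $(r-2M)^{-1}$ otherwise) and reading off the limit separately for each of the three masses. You instead note that all three formulas depend on $h'$ only through $p=f^{-1}-fh'^2$ and $f^2h'^2=\sin^2\eta$, and that $\sin\eta=fh'$ combined with \eqref{tan-eta-soln} yields the exact closed form $p=\bigl(f+(\hat{H}r+c_1/r^2)^2\bigr)^{-1}$ on all of $r>2M$; this reduces each mass to an elementary function of $r$ and eliminates the case analysis entirely. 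Your route buys (i) well-definedness and the limits in one stroke by continuity in $r$ (though note that $m_{\mathrm{BY}}$ at $c_1=-8M^3\hat{H}$ and $m_{\mathrm{LY}}=r-\sqrt{r(r-2M)}$ are only continuous, not smooth, at $r=2M$ because of the $\sqrt{r-2M}$ factor --- continuity is all you need, but ``smooth across $r=2M$'' is a slight overstatement); (ii) the exact identity $m_{\mathrm{LY}}=r(1-\sqrt{f})$, which explains structurally why the Liu--Yau limit $2M$ is independent of $\hat{H}$ and $c_1$ via the slice-independence of $|\mathbf{H}|^2=4f/r^2$ for spheres of symmetry; and (iii) uniform formulas that specialize to the paper's answers at $c_1=-8M^3\hat{H}$, where $2M\hat{H}+c_1/(4M^2)=0$. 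One substantive discrepancy deserves mention: for $c_1<-8M^3\hat{H}$ the paper states $m_{\mathrm{BY}}\to 2M(1-2M\hat{H}-c_1/(4M^2))$, while your limit $2M\bigl(1-\bigl|2M\hat{H}+c_1/(4M^2)\bigr|\bigr)$ equals $2M(1+2M\hat{H}+c_1/(4M^2))$ in that regime. Since $(f^{-1}-fh'^2)^{-1/2}$ in \eqref{cmc-by} denotes the positive root of a positive quantity, it converges to the absolute value $\bigl|2M\hat{H}+c_1/(4M^2)\bigr|$; your version is therefore the one consistent with the formulas as written, and the paper's stated limit in that case amounts to evaluating $\sqrt{f+X^2}$ as $X$ rather than $|X|$ for $X<0$. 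Finally, your closing observation --- that the divergence of $h'$ at the horizon is harmless because only $p$ and $\sin^2\eta$ enter the masses, with rigorous justification supplied by the Lee--Lee regularity of the CMC slices across the horizon in the Kruskal chart --- is exactly the point that the paper leaves implicit.
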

\begin{proof}
In the case $c_1 = -8M^3 \hat{H},$ it follows that the height function $h' = \mathcal{O} ((r-2M)^{-1/2})$ in the region $r \in (2M, 2M+ \delta)$ (see \cite{Lee-Lee_16}). In particular, we have
\begin{align}
h' = \hat{H} \left(\frac{r}{r-2M} \right)^{\halb} \left( \frac{r ( r^2 + 2Mr + 4M^2)^2}{ r^3 + \hat{H}^2 (r-2M) (r^2+2Mr + 4M^2)^2}\right)^{\halb}
\end{align}
thus, 

\begin{align}
m_{\text{H}} =& \frac{r}{2} (1- (f^{-1} -f h'^2)^{-1}) \notag\\
=& \frac{r}{2} \left( 1- \frac{f}{1- \hat{H}^2(r-2M) \left( \frac{r ( r^2 + 2Mr + 4M^2)^2}{ r^3 + \hat{H}^2 (r-2M) (r^2+2Mr + 4M^2)^2} \right) } \right) \notag\\
=& M \quad \text{as} \quad \Sigma(r) \to \mathcal{H}^{+}, \quad c_1 = - 8M^3\hat{H}.
\end{align}

 Likewise, if $c_1 < -8M^3 \hat{H}$ then $h' = \mathcal{O} ((r-2M)^{-1})$ in the region $r \in (2M, 2M+\delta), \delta >0$ is small. In particular, $h'$ is of the form
 \begin{align}
 h '= - \frac{1}{f} + \halb \frac{1}{ \left(f + (\hat{H} r + \frac{c_1}{r^2})^2 \right)} + \frac{1}{8}
 \frac{h}{ \left( f + (\hat{H} r + \frac{c_1}{r^2})^2 \right)^2} + \cdots
 \end{align}
 in the region $r \in (2M, 2M+\delta), \delta >0.$
   As a consequence,

\begin{align} \label{hawking-c1-upper}
m_{\text{H}} =&  \frac{r}{2} (1- (f^{-1} -fh'^2)^{-1}) \notag\\
\intertext{plugging in $h'$ in the region $r \in (2M, 2M+\delta), \delta >0$}
=& \frac{r}{2} \left(1- f \left(1 + \frac{1}{f} \left(\hat{H} r + \frac{c_1}{r^2} \right)^2 \right) \right)  \notag\\
=&M \left(1- \frac{c^2_1}{16M^4} - \frac{\hat{H}c_1}{M} - 4M^2\hat{H}^2 \right), \quad \text{as} \quad  \Sigma(r) \to \mathcal{H}^+, c_1 <-8M^3 \hat{H}. 
\end{align} 

On the other hand, for $c >-8M^3 \hat{H}$ in the region $r \in (2M, 2M+\delta), \delta >0,$ we have $h'$ of the form
\begin{align} \label{h-expansion}
h'= \frac{1}{f} - \halb \frac{1}{\left(f + (\hat{H} r + \frac{c_1}{r^2})^2 \right)} - \frac{1}{8} \frac{f}{ \left(f + (\hat{H} r + \frac{c_1}{r^2})^2  \right)^2} + \cdots
\end{align}
therefore,  we have
\begin{align}
m_{\text{H}}  =& \frac{r}{2} \left(1- f \left(1 + \frac{1}{f} \left(\hat{H} r + \frac{c_1}{r^2} \right)^2 \right) \right)  \notag\\
=&M \left(1- \frac{c^2_1}{16M^4} - \frac{\hat{H}c_1}{M} - 4M^2\hat{H}^2 \right), \quad \text{as} \quad  \Sigma(r) \to \mathcal{H}^+, c_1 > -8M^3 \hat{H}
\end{align}
analogous to \eqref{hawking-c1-upper}. 
Now then, for the Brown-York mass, for $c_1 = -8M^3 \hat{H},$
\begin{align}
m_{\text{BY}} =& r (1-(f^{-1} - fh'^2)^{-1/2}) \notag\\
=& r \left( 1- \left( \frac{f}{1-  \hat{H}^2(r-2M) \left( \frac{r ( r^2 + 2Mr + 4M^2)^2}{ r^3 + \hat{H}^2 (r-2M) (r^2+2Mr + 4M^2)^2} \right) } \right)^{\halb} \right) \notag\\
=&  2M \quad \text{as} \quad \Sigma(r) \to \mathcal{H}^{+}, c_1 = -8M^3 \hat{H}. 
\end{align}
Consider the case of constant mean curvature hypersurfaces with $c_1 <-8M^3 \hat{H},$
\begin{align}
m_{\text{BY}} =& r (1-(f^{-1} - fh'^2)^{-1/2})  \notag\\
\intertext{plugging in the behaviour of $h'$ for $c_1 < -8M^3 \hat{H}$}
 =& r \left(1- f^{\halb} \left(1+ \frac{1}{f} (\hat{H} r + \frac{c_1}{r^2})^2 \right)^{\halb} \right) \notag\\
=& 2M \left(1 - 2M \hat{H} - \frac{c_1}{4M^2} \right) , \quad \text{as} \quad \Sigma(r)\to \mathcal{H}^+, c_1<-8M^3\hat{H}.
\end{align}
On the other hand, for $c_1 >-8M^3 \hat{H},$ recall that $h' >0$ with the behaviour given by \eqref{h-expansion}. Therefore, we have
\begin{align}
m_{\text{BY}} =&  r (1-(f^{-1} - fh'^2)^{-1/2}) \notag\\
=&r \left(1- f^{\halb} \left(1+ \frac{1}{f} (\hat{H} r + \frac{c_1}{r^2})^2 \right)^{\halb} \right) \notag\\
=& 2M \left(1 - 2M \hat{H} - \frac{c_1}{4M^2} \right) , \quad \text{as} \quad \Sigma(r)\to \mathcal{H}^+, c_1>-8M^3\hat{H}.
\end{align}
Note that  $\displaystyle \lim_{\Sigma(r) \to \mathcal{H}^+} f =0$ with $f > 0$ in the region $ r \in (2M, 2M+ \delta)$.
Now let us turn to the Liu-Yau mass. We have for $c_1 = -8M^3 \hat{H},$
\begin{align}
m_{\text{LY}} =& r \big(1-(f^{-1} - f h'^2)^{-1/2}) (1-f^2 h'^2)^{1/2} \big) \notag\\
=& r \bigg( 1- \left( \frac{f}{1-  \hat{H}^2(r-2M) \bigg( \frac{r ( r^2 + 2Mr + 4M^2)^2}{ r^3 + \hat{H}^2 (r-2M) (r^2+2Mr + 4M^2)^2} \bigg) } \right)^{\halb} \cdot \notag\\
& \quad \quad \cdot \bigg(1- \hat{H}^2 \frac{(r-2M) ( r^2 + 2Mr + 4M^2)^2}{ r^3 + \hat{H}^2 (r-2M) (r^2 + 2Mr + 4M^2)} \bigg)  \bigg) \notag\\
\intertext{from which it is evident that}
m_{\text{LY}}= 2M 
\end{align}
as $\Sigma(r) \to \mathcal{H}^{+}$ for $c_1 = -8M^3 \hat{H}.$ Next, let us turn to the case, $c_1 <-8M^3 \hat{H}.$ We have
\begin{align}
m_{\text{LY}} =& r \left(1- (f^{-1} - f h'^2)^{-1/2} (1-f^2h'^2)^{1/2} \right) \notag\\
=& r \left(1- f^{1/2} \left(1+ \frac{1}{f} (\hat{H} r + \frac{c_1}{r^2})^2 \right)^{1/2} \left(\frac{1}{1+ \frac{1}{f} (\hat{H} r + \frac{c_1}{r^2})^2} \right)^{1/2} \right) \notag\\
=& r \left( \left(1+ 2M \hat{H} + \frac{c_1}{ 8M^2} \right) \left(\frac{1}{1+ \frac{1}{f} (\hat{H} r + \frac{c_1}{r^2})^2} \right)^{1/2} \right) \notag\\
=& 2M, \quad  \text{as} \quad \Sigma(r) \to \mathcal{H}^{+}, \quad c_1 < -8M^3\hat{H}.
\end{align}
Analogously, for the case $c_1 >-8M^3$ in the region $r \in (2M, 2M+ \delta), \delta >0$ we have
\begin{align}
m_{\text{LY}}=& r \left(1- f^{1/2} \left(1+ \frac{1}{f} (\hat{H} r + \frac{c_1}{r^2})^2 \right)^{1/2} \left(\frac{1}{1+ \frac{1}{f} (\hat{H} r + \frac{c_1}{r^2})^2} \right)^{1/2} \right) \notag\\
=& r \left( \left(1- 2M \hat{H} - \frac{c_1}{ 8M^2} \right) \left(\frac{1}{1+ \frac{1}{f} (\hat{H} r + \frac{c_1}{r^2})^2} \right)^{1/2} \right) \notag\\
\intertext{which also gives the limit}
=& 2M, \quad  \text{as} \quad \Sigma(r) \to \mathcal{H}^{+}, \quad c_1 > -8M^3\hat{H}.
\end{align}
\end{proof}
 In our analysis, we have made a choice of the constant mean curvature and the integration constant in the equation for $h$ \eqref{f-equation}, so that it has a desirable behaviour at infinity. However, we can, in principle, make another choice of these parameters and discuss the constant mean curvature hypersurfaces in the interior of the black hole. We would like to remind the reader that, in the usual interior Schwarzschild coordinates, the $t=const.$ hypersurfaces are no longer spacelike. Indeed, one can note that the $r=const.$ hypersurfaces are spacelike ($\ptl_r$ is a timelike vector) in the Schwarzschild interior. However, we shall follow the work of Lee-Lee\cite{Lee-Lee_16}, and introduce a height function to construct constant mean curvature hypersurfaces in the interior. In the Schwarzschild interior, we consider two cases of the height function $\mathfrak{h}.$
 
 \subsection{Case I: $\mathfrak{h}' >0$}

 Consider a time function $ \vec{t}$ such that $\vec{t} = \mathfrak{h} - t$ and we shall impose the condition $\bar{g}^{\mu \nu} \ptl_\nu \vec{t} \ptl_\mu \vec{t}= - f^{-1} + f \mathfrak{h'}^2 <0,$ for a timelike normal of $\vec{\Sigma}.$ Correspondingly, the orthonormal frame is given by
 \begin{align}
 \mbo{e}_3 = (f^{-1} - \mathfrak{h}^2 f )^{-1} (\mathfrak{h}' \ptl_t + \ptl_r), \quad \mbo{e}_4=  (f^{-1} - \mathfrak{h}^2 f )^{-1}(f^{-1} \ptl_t + \mathfrak{h}' f \ptl_r)
 \end{align}

The CMC equation, in the black hole interior is given by 

\begin{align} \label{CMC-int-h'-neg}
\mathfrak{h}'' + \left( ( f^{-1} - h'^2 f) ( \frac{2f}{r} + \frac{f'}{2}) + \frac{h'}{h} \right) \mathfrak{h}'
- 3 \hat{H} (f^{-1} - \mathfrak{h}'^2 f)^{3/2} =0, \quad \mathfrak{h}'>0. 
\end{align} 

\noindent Following \cite{Lee-Lee_11, Lee-Lee_16}, to the this equation , introduce the variable $\eta(r)$ such that  $\sec \eta = \mathfrak{h}' f$ where $\eta \in ( \pi/2, \pi ).$ The equation \eqref{CMC-int-h'-neg} can be explicitly solved as $\csc  \eta = -\frac{1}{(-f)^{1/2}} \left(- \hat{H} r - \frac{c_2}{r^2} \right) $ then we get 

\begin{align}
\mathfrak{h}' = \frac{1}{-f} \left(  \frac{\csc^2 \eta }{ \csc^2 \eta -1}\right), \eta \in(\pi/2, \pi), \quad \mathfrak{h}' >0. 
\end{align}

\subsection{Case II: $\mathfrak{h}' <0$}
Now Consider a time function $ \vec{t}$ such that $\vec{t} = -\mathfrak{h} + t$ and we shall impose the condition $\bar{g}^{\mu \nu} \ptl_\nu \vec{t} \ptl_\mu \vec{t} <0,$ for a timelike normal of $\vec{\Sigma}.$ Correspondingly, the orthonormal frame is given by
 \begin{align}
 \mbo{e}_3 = -(f^{-1} - \mathfrak{h}^2 f )^{-1} (\mathfrak{h}' \ptl_t + \ptl_r), \quad \mbo{e}_4=  -(f^{-1} - \mathfrak{h}^2 f )^{-1}(f^{-1} \ptl_t + \mathfrak{h}' f \ptl_r)
 \end{align}

The CMC equation, analogous to the previous expression in the black hole (interior) region, is given by 

\begin{align}
\mathfrak{h}'' + \left( ( f^{-1} - \mathfrak{h}'^2 f) ( \frac{2f}{r} + \frac{f'}{2}) + \frac{f'}{f} \right) \mathfrak{h}'
+ 3 \hat{H} (f^{-1} - \mathfrak{h}'^2 f)^{3/2} =0.
\end{align}  

In the interior this equation is solved by assigning 
\begin{align}
f \mathfrak{h}' =  \sec (\eta(r)), \quad \eta \in (0, \pi/2)
\end{align}
where 
\begin{align}
\csc \eta = \frac{1}{(-f)^{1/2}} \left(- Hr -\frac{c_2}{r^2} \right), \eta \in (0, \pi/2) \quad \text{so that} \quad
\mathfrak{h}' = \frac{1}{f} \sqrt{\frac{\csc^2 \eta}{\csc^2 \eta-1}}, \mathfrak{h}' <0,
\end{align}
where $c_2$ is a constant of integration. In contrast with the exterior, we have the following restriction on the existence of CMC in the interior: 
\begin{align}
\csc \eta \geq 1, \quad \text{so that} \quad k_H \geq c_2, \quad
k_H \fdg = - Hr^3 - r^{3/2} (2M-r)^{1/2}.
\end{align}
 
In this region,  if we choose $c_2 < -8M^3H$ then $\mathfrak{h}'$ is $\mathcal{O}((r-2M)^{-1} )$  and the spacelike CMC condition is preserved as $r \to 2M^{-}.$ This case allows us to consider smooth extensions of the constant mean curvature hypersurfaces into the Schwarzschild interior.

Let us now find the quasi-local mass quantities in the Schwarzschild interior. We would like to emphasize that in the standard coordinates in the Schwarzschild interior, the $t=const$ hypersurfaces are not necessarily spacelike. In the interior, we can construct `cylindrical' CMC hypersurfaces from $r=const$
hypersurfaces. 

However, we shall follow the construction where the introduction of the height function $\mathfrak{h}$ makes $\vec{t} =const.$ hypersurfaces manifestly spacelike. As a consequence, the previous formulas for quasi-local mass carry forward with appropriate modifications. Let us start with the Hawking mass: 
\begin{align}
m_{\text{H}} =&  \frac{\vert \Sigma \vert^{1/2}}{(16 \pi)^{3/2}} (16 \pi - \int_{\Sigma} H^2)  \notag\\
=& \frac{r}{2} (1- (f^{-1} - f \mathfrak{h}^2))^{-1}, \quad r<2M
\end{align}
and the Brown-York mass
\begin{align}
m_{\text{BY}} =& \int_{\Sigma} (H_0 - H ) \bar{\mu}_{\Sigma} \notag\\
=& r (1-(f^{-1} - f \mathfrak{h}'^2)^{-1/2}), \quad r<2M
\end{align}

Let us establish the blow up dates of the quasi-local masses $m_{\text{H}}$ and $m_{\text{BY}}$ at the singularity. Suppose $c_2$ is such that $\mathfrak{h}$ is well-defined in the Schwarzschild interior then. We have 

\begin{align}
\lim_{\Sigma(r) \to i^+  }m_{\text{H}} =& \lim_{\Sigma (r) \to i^+}  \frac{r}{2} (1- (f^{-1} - f \mathfrak{h}^2))^{-1} = \lim_{r \to 0} rf \left(1+ f^{-1} (\hat{H} f + \frac{c_2}{r^2})^2 \right) \\
=& \lim_{r \to 0} \left(rf + r (\hat{H} r + \frac{c_2}{r^2})^2 \right)
\end{align} 
Therefore, it follows that near the singularity $r \in (0, \delta), \delta >0$ the Hawking mass $m_{\text{H}}$ blows up like $\mathcal{O}(r^{-3})$ for $c_2 \neq 0.$ Likewise, let us analyse the behaviour of the Brown-York mass near the singularity. We have 

\begin{align}
\lim_{\Sigma(r) \to i^+  }m_{\text{BY}} =& \lim_{\Sigma (r) \to i^+} r \left(1-(f^{-1} - f \mathfrak{h}'^2)^{-1/2}\right) \notag\\
=& \lim_{r \to 0}  r \left( 1- \left(f - \left(\hat{H} r + \frac{c_2}{r^2} \right)^2 \right)^{1/2} \right),
\end{align}
from which it follows that the Brown-York mass $m_{\text{BY}}$ has $\mathcal{O}(r^{-1})$ blow up behaviour for $c_2 \neq 0$ near the singularity.  

\begin{lemma}
Suppose there exists solution of the CMC equation for $c_2 <-8M^3 \hat{H}$, then the quasi-local mass quantities $m_{\text{H}}, m_{\text{BY}}$  are continuous as $\Sigma(r) \to \mathcal{H}$  if the solutions of the constant mean curvature equations are such that $c_2 =c_1 (<-8M^3 \hat{H}).$ This condition for the continuity is consistent with a necessary condition for the smooth gluing of the constant mean curvature hypersurfaces in the interior and exterior of Schwarzschild spacetime \cite{Lee-Lee_16}
\end{lemma}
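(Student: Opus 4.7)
The strategy is to mirror, on the interior side, the reduction that produced the horizon limits in the preceding lemma, and to observe that the only free parameter entering those limits is the constant of integration. Since the formulas for $m_{\text{H}}$ and $m_{\text{BY}}$ depend on the height function only through the combination $f^{-1}-f\mathfrak{h}'^{2}$ inside the black hole (respectively $f^{-1}-fh'^{2}$ in the exterior), it suffices to show that this combination admits the same closed-form expression on both sides of $\mathcal{H}$.

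First I would use the interior CMC integrals of Cases~I and~II: substituting $f\mathfrak{h}'=\sec\eta$ together with $\csc\eta=\pm(-f)^{-1/2}(\hat{H}r+c_2/r^{2})$, a direct computation gives
\begin{align}
f^{-1}-f\mathfrak{h}'^{2}\;=\;-\tan^{2}\eta/f\;=\;\frac{1}{G^{2}+f},\qquad G\fdg=\hat{H}r+\tfrac{c_2}{r^{2}}.
\end{align}
This is formally identical to the exterior identity $f^{-1}-fh'^{2}=1/(G^{2}+f)$ (with $c_1$ in place of $c_2$) that is implicit in the computation of the previous lemma. Substituting into the interior expressions for the masses and taking the limit $r\to 2M^{-}$, where $f\to 0$ and $G\to 2M\hat{H}+c_2/(4M^{2})$, yields the same algebraic expressions as the $\Sigma(r)\to\mathcal{H}^{+}$ limits computed in the previous lemma, but with $c_1$ replaced by $c_2$. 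Hence under $c_1=c_2<-8M^{3}\hat{H}$ the two one-sided limits coincide and both $m_{\text{H}}$ and $m_{\text{BY}}$ extend continuously across $\mathcal{H}$.

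The step I expect to be the most delicate is the sign tracking in the $\eta$-substitution inside the black hole: because $f<0$, one must keep careful account of which branch of $\csc\eta$ is used in Case~I versus Case~II in order to produce $-\tan^{2}\eta/f$ with the correct overall sign and recover the clean form $1/(G^{2}+f)$. A sign slip there turns $G^{2}+f$ into $G^{2}-f$ and destroys the matching at $r=2M$. Finally, for the concluding remark of the lemma I would note that $c_1=c_2$ is precisely the equality of the leading $(r-2M)^{-1}$ coefficients of $h'$ and $\mathfrak{h}'$ across the horizon, which is the algebraic content of the $C^{1}$-gluing condition for CMC foliations in Lee--Lee, exactly matching the necessary condition for smooth gluing cited in the statement.
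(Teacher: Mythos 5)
Your proposal is correct and follows essentially the same route as the paper: both reduce the horizon limits of $m_{\text{H}}$ and $m_{\text{BY}}$ to the first integral of the CMC equation ($\tan\eta$ outside, $\csc\eta$ inside) and match the one-sided limits under $c_1=c_2<-8M^3\hat{H}$. Your exact identity $f^{-1}-f\mathfrak{h}'^2=1/\bigl((\hat{H}r+c_2/r^2)^2+f\bigr)$ is a cleaner packaging of the near-horizon expansions that the paper substitutes case by case, and your warning about sign tracking is well placed --- the paper's own $\mathfrak{h}'>0$ and $\mathfrak{h}'<0$ Brown--York limits differ by exactly such a sign, which your formulation (where the square root forces an absolute value of $2M\hat{H}+c_2/(4M^2)$) would expose and resolve.
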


\begin{proof}
In the case where $c_2 < -8M^3 \hat{H},$ it may be noted that $\mathfrak{h}$ is well-defined in a left neightbourhood of the horizon $\mathcal{H}$ i.e., $r \in (2M-\delta, 2M).$  Furthermore, $\mathfrak{h}' = \mathcal{O} ((2M-r)^{-1}),$ $r \in (2M-\delta, \delta), \delta>0.$ In particular, for the case $\mathfrak{h}' >0,$ we have
\begin{align}
\mathfrak{h'} = -\frac{1}{f} + \frac{1}{2f^2 (Hr + \frac{c^2_2}{r^2})^2} + \cdots
\end{align}
in the region $r \in (2M-\delta, 2M), \delta>0.$ Now let us consider the Hawking mass as $\Sigma(r) \to \mathcal{H}^{-}.$
\begin{align}
m_{\text{H}} =& \frac{r}{2} (1-(f^{-1}-f\mathfrak{h}'^2)^{-1} ) \notag\\
=& \frac{r}{2} \left( 1- f \left(1-f^2 \left(\frac{1}{(-f)^2} \frac{\csc^2 \eta}{\csc^2 \eta -1} \right) \right)^{-1} \right), \quad \eta \in (\pi/2, \pi) \notag\\
=& \frac{r}{2}\left(1- f \left(1- \left(\frac{1}{(-f)^{1/2}} (-\hat{H} r - \frac{c_2}{r^2}) \right)^2 \right) \right) \notag\\
=&  \frac{r}{2} \left(  1+ f \left(-\frac{1}{f} (\hat{H}r + \frac{c_2}{r^2})^2-1 \right)\right), \notag\\
=& M \left(1- \frac{c^2_2}{16M^2} -  \frac{c_2 \hat{H}} {M} - 4M^2 \hat{H}^2 \right)
\end{align}
as $\Sigma(r) \to \mathcal{H}^-, c_2 <-8M^3 \hat{H}, \mathfrak{h}' >0.$ The behaviour of the Hawking mass $m_{\text{H}}$ is quite similar for $\mathfrak{h}' <0:$ 
\begin{align}
m_{\text{H}} =& \frac{r}{2} \left( 1- f \left(1-f^2 \left(\frac{1}{(-f)^2} \frac{\csc^2 \eta}{\csc^2 \eta -1} \right) \right)^{-1} \right), \quad \eta \in (0, \pi/2) \notag\\
=&  \frac{r}{2}\left(1- f \left(1- \left(\frac{1}{(-f)^{1/2}} (-\hat{H} r - \frac{c_2}{r^2}) \right)^2 \right) \right) \notag\\
=&  \frac{r}{2} \left(  1+ f \left(-\frac{1}{f} (\hat{H}r + \frac{c_2}{r^2})^2-1 \right)\right), \notag\\
=& M \left(1- \frac{c^2_2}{16M^2} -  \frac{c_2 \hat{H}}{M} - 4 M^2 \hat{H}^2 \right), \quad \Sigma(r) \to \mathcal{H}^-, c_2 <-8M^3 \hat{H}, \mathfrak{h}' <0.
\end{align}
Analogously, consider the Brown-York mass $m_{\text{BY}}$ as $\Sigma(r) \to \mathcal{H}^{-}.$ We have,

\begin{align}
m_{\text{BY}} =& r (1- (f^{-1} - f \mathfrak{h}'^2)^{-1/2}) \notag\\
=& r\left (   1- (-f)^{1/2} \left(  f^2  \left(\frac{\csc^2 \eta} {(-f)^2 (\csc^2 \eta -1)} \right) -1\right)^{-1/2} \right), \quad \eta \in (\pi/2, \pi) \notag\\
=& r \left(1-  \left(f \left(1+ \frac{1}{f} \left(\hat{H} r + c_2 \right)^2  \right) \right)^{1/2} \right) \notag\\
=& 2M \left(1- 2M \hat{H} - \frac{c_2}{4M^2} \right), \quad \text{as} \quad \Sigma (r) \to \mathcal{H}^{-}, c_2 <-8M^3 \hat{H}, \mathfrak{h}'>0.
\end{align}
Analogously, 
\begin{align}
m_{\text{BY}} =& r (1- (f^{-1} - f \mathfrak{h}'^2)^{-1/2}) \notag\\
=&  r\left (   1- (-f)^{1/2} \left(  f^2  \left(\frac{\csc^2 \eta} {f^2 (\csc^2 \eta -1)} \right) -1\right)^{-1/2} \right), \quad \eta \in (0, \pi/2) \notag\\
=& r \left(1- f \left(1+ \frac{1}{f} \left(\hat{H} r + c_2 \right)^2 \right)^{1/2} \right) \notag\\
=& 2M \left(1+ 2M \hat{H} + \frac{c_2}{4M^2} \right), \quad \text{as} \quad \Sigma (r) \to \mathcal{H}^{-}, c_2 <-8M^3 \hat{H}, \mathfrak{h}'<0.
\end{align}
The result on the continuity follows. 
\end{proof}
 
\subsection*{Acknowledgements}
N. Gudapati acknowledges the support of the Gordon and Betty Moore Foundation and the John Templeton Foundation through the Black Hole Initiative of Harvard Univeristy. S.-T. Yau acknowledges the support from NSF Grant DMS-1607871.  We thank Mu-Tao Wang for his comments on the article. 

\bibliography{central-bib}

\begin{thebibliography}{10}

\bibitem{bray2001}
Hubert~L. Bray.
\newblock Proof of the {R}iemannian {P}enrose inequality using the positive
  mass theorem.
\newblock {\em J. Differential Geom.}, 59(2):177--267, 10 2001.

\bibitem{Brendle_13}
S.~Brendle.
\newblock Constant mean curvature surfaces in warped product manifolds.
\newblock {\em Publ.math.IHES}, 117:247--269, 2013.

\bibitem{BCIsen_80}
Dieter~R. Brill, John~M. Cavallo, and James~A. Isenberg.
\newblock K{‐}surfaces in the {S}chwarzschild space‐time and the
  construction of lattice cosmologies.
\newblock {\em Journal of Mathematical Physics}, 21(12):2789--2796, 1980.

\bibitem{BY_93}
J.D. Brown and J.W. York.
\newblock Quasilocal energy and conserved charges derived from the
  gravitational action.
\newblock {\em Phys. Rev. D}, 47:1407--1419, 1993.

\bibitem{Po-Wa-Yau_14}
P.-N Chen, M.-T Wang, and S.-T Yau.
\newblock Minimizing properties of critical points of quasi-local energy.
\newblock {\em Comm. Math. Phys.}, 329:919--935, 2014.

\bibitem{Po-Wa-Yau_16}
P.-N Chen, M.-T Wang, and S.-T Yau.
\newblock Quasi-local mass at the null infinity of the {V}aidya spacetime.
\newblock {\em Harvard CMSA Proceedings}, 2016.

\bibitem{Geroch_73}
R.~Geroch.
\newblock Energy extraction.
\newblock {\em Ann. N.Y. Acad. Sci.}, 224(1):108--117, 1973.

\bibitem{huisken2001}
Gerhard Huisken and Tom Ilmanen.
\newblock The inverse mean curvature flow and the {R}iemannian {P}enrose
  inequality.
\newblock {\em J. Differential Geom.}, 59(3):353--437, 11 2001.

\bibitem{Lee-Lee_11}
K.-W. Lee and Y.-I. Lee.
\newblock Spacelike spherically symmetric {C}{M}{C} hypersurfaces in
  {S}chwarzschild spacetimes {I}: construction.
\newblock {\em arXiv}, 1111.2679, 2011.

\bibitem{Lee-Lee_16}
K.-W. Lee and Y.-I. Lee.
\newblock Spacelike spherically symmetric {C}{M}{C} foliation in the extended
  {S}chwarzschild spacetime.
\newblock {\em Ann. Henri. Poincar\'{e}}, 17:1477--1503, 2016.

\bibitem{liu-yau-1}
M.~Liu and S.-T. Yau.
\newblock Positivity of quasi-local mass {I}.
\newblock {\em Phys. Rev. Lett.}, 90(231102), 2003.

\bibitem{liu-yau-2}
M.~Liu and S.-T. Yau.
\newblock Positivity of quasi-local mass {I}{I}.
\newblock {\em J. Amer. Math. Soc.}, 19(1):181--204, 2006.

\bibitem{MM_09}
Edward Malec and Niall~\'O Murchadha.
\newblock General spherically symmetric constant mean curvature foliations of
  the {S}chwarzschild solution.
\newblock {\em Phys. Rev. D}, 80:024017, Jul 2009.

\bibitem{MM_03}
Edward Malec and Niall \'O~Murchadha.
\newblock Constant mean curvature slices in the extended {S}chwarzschild
  solution and the collapse of the lapse.
\newblock {\em Phys. Rev. D}, 68:124019, Dec 2003.

\bibitem{Nirenberg_53}
Louis Nirenberg.
\newblock The {W}eyl and {M}inkowski problems in differential geometry in the
  large.
\newblock {\em Communications on Pure and Applied Mathematics}, 6(3):337--394,
  1953.

\bibitem{Pogorelov_52}
A.~Pogorelov.
\newblock Regularity of a convex surface with given {G}aussian curvature.
\newblock {\em Matematicheskii Sbornik}, 73(1), 1952.

\bibitem{schoen-yau-1}
R.~Schoen and S.-T. Yau.
\newblock On the proof of the positive mass conjecture in general relativity.
\newblock {\em Comm. Math. Phys.}, 65(1):45--76, 1979.

\bibitem{schoen-yau-2}
R.~Schoen and S.-T. Yau.
\newblock Proof of the positive mass theorem {I}{I}.
\newblock {\em Comm. Math. Phys.}, 79(2):231--260, 1981.

\bibitem{shi2002}
Yuguang Shi and Luen-Fai Tam.
\newblock Positive mass theorem and the boundary behaviors of compact manifolds
  with nonnegative scalar curvature.
\newblock {\em J. Differential Geom.}, 62(1):79--125, 09 2002.

\bibitem{Wang_15}
M.-T Wang.
\newblock Four lectures on quasi-local mass.
\newblock {\em arXiv}, 1510.02931, 2015.

\bibitem{Wang2009}
M.-T Wang and S.-T Yau.
\newblock Isometric embeddings into the {M}inkowski space and new quasi-local
  mass.
\newblock {\em Comm. Math. Phys.}, 288(3):919--942, Jun 2009.

\bibitem{witten-pmt}
E.~Witten.
\newblock A new proof of the positive energy theorem.
\newblock {\em Comm. Math. Phys.}, 80(3):381--402, 1981.

\end{thebibliography}
\bibliographystyle{plain}
\end{document}